\newcommand{\overbar}[1]{\mkern1.5mu\overline{\mkern-1.5mu#1\mkern-1.5mu}\mkern 1.5mu}
\theoremstyle{plain}
\newtheorem{thm}{Theorem}[section]
\theoremstyle{definition}
\newtheorem{defn}{Definition}[section]
\title{On properties of Karlsson Hadamards and sets of Mutually Unbiased Bases in dimension six}
\author{%
  Andrew S Maxwell$^1$, %
  Stephen Brierley$^2$ 
 \\ \\
$^1${\it Dept. of Mathematics, University of Bristol, Bristol BS8 1TW, UK}\\
$^2${\it Heilbronn Institute for Mathematical Research, Dept. of Mathematics,} \\{\it University of Bristol, Bristol BS8 1TW, UK}\\
}
\begin{document}

\maketitle

\begin{abstract}
The complete classification of all $6 \times 6$ complex Hadamard matrices is an open problem. The 3-parameter Karlsson family encapsulates all Hadamards that have been parametrised explicitly. We prove that such matrices satisfy a non-trivial constraint conjectured to hold for (almost) all $6 \times 6$ Hadamard matrices. Our result imposes additional conditions in the linear programming approach to the mutually unbiased bases problem recently proposed by Matolcsi et al. Unfortunately running the linear programs we were unable to conclude that a complete set of mutually unbiased bases cannot be constructed from Karlsson Hadamards alone.

\smallskip
\noindent \textbf{AMS Classification.} 05B20

\smallskip
\noindent \textbf{Keywords.} Complex Hadamard matrices, Mutually unbiased bases

\end{abstract}

\section{Introduction}

In prime-power dimensions there are several ingenious methods to construct a complete set of mutually unbiased bases (MUBs) making use of finite fields, the Heisenberg-Weyl group, generalised angular momentum operators, and identities from number theory. However, even for the smallest composite dimension $d = 6$, the existence of such a set remains an open problem (See \cite{durt+10} for a review). The long-standing conjecture (also open in the study of Lie algebras \cite{boykin+05}) is that in non-prime-power dimensions complete sets do not exist. Distinguishing quantum systems based on their number theoretic properties would be an unusual feature not seen in Classical mechanics. The additional symmetry in dimensions such as $d = 3 \times 3$ would appear to enlarge the boundary of the set of quantum correlations (potentially) resulting in larger violations of Bell inequalities.  

Two recent papers have made progress in proving the non-existence of a complete set of MUBs in dimension 6. The first regards the classification of all complex Hadamard matrices. A complete set of $d+1$ MUBs is equivalent to a set of $d$ complex Hadamard matrices plus the identity matrix. Thus finding all possible Hadamard matrices in dimension 6 would be a significant step towards solving the MUBs problem. Recently, Karlsson has found an extremely nice 3-parameter family \cite{karlsson}. Whilst this does not fullly classify all $6 \times 6$ Hadamard matrices \cite{szollosi10}, it includes all explicitly parametrised families (except the isolated Spectral matrix \cite{tao04}). Moreover, Karlsson's parametrisation using Modius transformations is succinct and has allowed us to perform the calculations here.

The second result we use is due to Matolcsi et al, who demonstrate that sets of complex Hadamard matrices can be expressed as elements from a compact abelian group \cite{matolcsi10, matolcsi+12}. Fourier analysis then allows us to re-write the problem using the dual group where existence (or not) corresponds to a \emph{linear} program. This method was used to classify all complete sets of MUBs in dimensions $d=2 \ldots 5$ and obtain some partial results in $d=6$ \cite{matolcsi+12}. 

In this paper, we prove that Karlsson Hadamard matrices satisfy a previously unknown constraint motivated by the linear programming approach of Matolcsi et al. Feeding this new condition in to the linear program, we attempt to prove that a complete set of MUBs cannot be constructed from Karlsson Hadamards alone. Unfortunately we were unsuccessful in this endeavour but believe this is a promising avenue of research as our result can be easily combined with other conditions. Previous attempts to prove non-existence have required extensive computational power \cite{brierley+10, Jaming+12}. The linear programming approach is very appealing from this perspective. Using sparse matrix methods, we had no problems running the program in MATLAB on a desktop PC.

\section{Conditions on unbiased sets of complex Hadamard matrices}
A $d \times d$ unitary matrix, $H$, is called a \emph{complex Hadamard matrix} if its entries are all complex phases, 
$|H_{ij}|^2=1 $, for $i,j=1 \ldots d$. Any set of $r+1$ MUBs has an equivalent representation in terms of Hadamard matrices $\{ I, H_1, \ldots , H_r \}$, where $I$ is the $d \times d$ identity element (see \cite{Bengtsson+13} for a summary of equivalences on sets of MUBs).

There is a zoo of Hadamard matrices in dimension six, easily accessible online \cite{onlinelist}. At the time of writing, there where 10 families with various properties such as being symmetric or continuously deformed from the Fourier matrix. There is a 4-parameter family of Hadamards \cite{szollosi10} and furthermore, it is known that 4 parameters are sufficient for a complete classification \cite{zyczkowski06}. Unfortunately, the construction given in \cite{szollosi10} is implicit making it difficult to work with.

Of particular importance to the classification of $6 \times 6$ Hadamard matrices is Karlsson's 3-parameter family that includes all explicitly parametrised Hadamards with the exception of the spectral matrix \cite{karlsson}. The matrices are called $H_2$ reducible meaning that there are made from nine $2\times2$ Hadamard matrices. Imposing the $H_2$ reducible condition makes it possible to derive the following simple form of the Hadamard matrices \cite{karlsson}.

\begin{equation}
\label{eq:Karlsson_Standard}
K= \begin{pmatrix}
		F_2 & Z_1 & Z_2\\
		Z_3 & a & b\\
		Z_4 & c & d
		\end{pmatrix}\, ,
\end{equation}
where $F_2,Z_{1,2,3,4},a,b,c,d$ are $2\times2$ matrices defined by 
\begin{align*}
F_2&=\begin{pmatrix}
					1 & 1\\
					1 & -1
					\end{pmatrix}&
Z_{1,2}&=\begin{pmatrix} 1&1\\z_i&-z_i\\ \end{pmatrix}&
Z_{3,4}& = \begin{pmatrix} 1&z_i\\1&-z_i\\ \end{pmatrix}\\
a &= \frac{1}{2}Z_3AZ_1 & b&=\frac{1}{2}Z_3BZ_2 &A&=\begin{pmatrix}
									A_{11}&A_{12}\\
									\overbar{A_{12}}&-\overbar{A_{11}}
									\end{pmatrix}\\
c &= \frac{1}{2}Z_4BZ_1 & d&=\frac{1}{2}Z_4AZ_2 &B&=-F_2-A
\end{align*}
The matrix elements $A_{11}$ and $A_{12}$ are given by 
\begin{align*}
A_{11}&=-\frac{1}{2}+i\frac{\sqrt{3}}{2}\left(\cos(\theta)+e^{-i\varphi}\sin(\theta)\right) &A_{12}&=-\frac{1}{2}+i\frac{\sqrt{3}}{2}\left(-\cos(\theta)+e^{i\varphi}\sin(\theta)\right) \, .
\end{align*}
leaving a total of six undefined parameters $z_{1,2,3,4},$ $\theta$ and $\phi$. However, the $z$ variables are related via the mobi\"{u}s transformations 
\begin{align}
\label{eq:mobius1}
z_{3}^{2} &= \mathcal{M}_{A}\left( z_1^2\right) &z_3^2&=\mathcal{M}_B\left( z_{2}^{2}\right)\\
\label{eq:mobius2}
z_{4}^{2} &= \mathcal{M}_{A}\left( z_2^2\right) &z_4^2&=\mathcal{M}_B\left( z_{1}^{2}\right)\\
\intertext{where}
\label{eq:mobius3}
\mathcal{M}(z) &= \frac{\alpha z-\beta}{\overbar{\beta}z-\overbar{\alpha}} & \mathcal{M}^{-1}(w) &=\frac{\overbar{\alpha} w-\beta}{\overbar{\beta}w-\alpha} 
\end{align}
and $\alpha_A = A_{12}^2$, $\beta_A = A_{11}^2$, $\alpha_B = B_{12}^2$ and $\beta_B = B_{11}^2$. Leaving $z_1$, $\theta$ and $\phi$ as the three free parameters.

\section{Fourier Analytic approach}
In two recent papers, \cite{matolcsi10} and \cite{matolcsi+12}, Matolcsi et al. explain how to use techniques from Fourier analysis to attack the mutually unbiased bases problem. We briefly summarise their ideas for completeness and to define the notation used later. 

We can think of a set of $r$ Hadamards $\{ H_1, \ldots, H_r  \}$ as $D=rd^2$ phases, one from each matrix entry. Therefore, the set $\{ H_1, \ldots, H_r  \}$ corresponds to a point on the $D$-dimensional torus $ \mathbb{T}^D$. We can multiply two points in $ \mathbb{T}^D$ by multiplying component-wise so with this binary operation, $ \mathbb{T}^D $ forms a group. Since $\mathbb{T}^D$ is a locally compact abelian group, it has a dual $\hat{\mathbb{T}}^D = \mathbb{Z}^D$, characters of which, $\gamma = (r_1, \ldots r_D)\in \mathbb{Z}^D$, act on the group elements $v=(v_1 \ldots v_D)\in \mathbb{T}^D$ by exponentiation
$$
\gamma(v)=v_1^{r_1} \ldots v_D^{r_D} \, .
$$
We can now define the Fourier transform of a set $ S\subset \mathbb{T}^D  $ in the usual way  
$$
\hat{S}(\gamma) = \sum_{s \in S}s^{\gamma} \, .
$$ 

A natural set to consider is given by the columns $\mathbf{c}_1 ,\ldots \mathbf{c}_{d}$ of a Hadamard matrix since we can later impose conditions such as orthogonality (and unbiasedness). Its Fourier transform is given by the function
$$
\sum^{d}_{k=1} \mathbf{c}_{k}^{\gamma} = \sum^{d}_{k=1} \prod_{i=1}^d  (H_j)_{ik}^{\gamma} \, .
$$
In what follows, it will be more convenient to define the permutation invariant version of this function
\begin{equation} \label{Def:smallg}
g_j(\gamma):= \frac{1}{d!}\sum_{\sigma}\sum^{d}_{k=1} \mathbf{c}_{k}^{\sigma(\gamma)}=\frac{1}{d!}\sum_{\sigma}\sum^{d}_{k=1}\prod^{d}_{i=1}(H_j)_{ik}^{\sigma(\gamma)} \, ,
\end{equation}
where we sum over all $d!$ permutations $\sigma$. Since $\overbar{g_j}=\sum_{k=1}^{d}\mathbf{c}_k^{-\gamma}$, the modulus of $g_j(\gamma)$ can be written as
\begin{equation}
G_j(\gamma):=|g_j(\gamma)|^2=\frac{1}{d!}\sum_{\sigma}\sum_{k,l=1}^d\left( \frac{\mathbf{c}_k}{\mathbf{c}_l} \right)^{\sigma(\gamma)} \, . \label{Def:BigGj}
\end{equation}

Now suppose we have a complete set of MUBs corresponding to $d$ complex Hadamard matrices. We can define the functions
\begin{align}
G(\gamma)&:=\sum^{d}_{j=1}|g_j(\gamma)|^2=\frac{1}{d!}\sum_{\forall \sigma}\sum^{d}_{j,k,l=1} \left(\frac{ \mathbf{c}_{k}}{\mathbf{c}_{l}}\right)^{\sigma(\gamma)}  \label{Def:bigG}
\end{align}
and,
\begin{align}
f(\gamma)&:=\sum^{d}_{j=1}g_{j}(\gamma) \label{Def:smallf}\\
F(\gamma)&:=|f(\gamma)|^2 \, . \label{Def:bigF}
\end{align}

\section{Constraints on sets of mutually unbiased Hadamard matrices}
The functions defined in Eqs. (\ref{Def:smallg}) to (\ref{Def:bigF}) satisfy certain conditions for the inputs to correspond to a set of mutually unbiased Hadamard matrices. Following \cite{matolcsi+12}, we derive constraints that hold for \emph{any} set of complex Hadamard matrices then in Theorem 1 present an additional non-trivial condition that we are able to prove holds for matrices from the Karlsson family. Taken together, these equalities and inequalities will later constrain a linear program. 

Let $\pi_r=(0,0,\dots,1,\dots,0)$ with the $r$th coordinate equal to 1.
\begin{align}
\sum_{r=1}^{d}G_{j}(\gamma+\pi_{r})&=\frac{1}{d!}\sum_{r=1}^{d}\left(\sum_{\sigma} \sum_{k,l=1}^{d}\left(\frac{c_{k}}{c_l}\right)^{\sigma(\gamma+\pi_{r})}\right) \nonumber\\
&=\frac{1}{d!}\sum_{r=1}^{d}\left(\sum_{\sigma} \sum_{k,l=1}^{d}\left(\frac{c_{k}}{c_l}\right)^{\sigma(\gamma)}\left(\frac{c_{k}}{c_l}\right)^{\sigma(\pi_{r})}\right) \nonumber\\
&=\frac{1}{d!}\sum_{\sigma} \sum_{k,l=1}^{d}\left(\frac{c_{k}}{c_l}\right)^{\sigma(\gamma)}\left(\sum_{r=1}^{d}\left(\frac{c_{k}}{c_l}\right)^{\sigma(\pi_{r})}\right) \label{cond:1}\\
&=\frac{1}{d!}\sum_{\sigma} \sum_{k,l=1}^{d}\left(\frac{c_{k}}{c_l}\right)^{\sigma(\gamma)}\sum_{r=1}^{d}\delta_{l,k} \nonumber \\
&=\frac{1}{d!} \sum_{\sigma} \sum_{l=k}^{d}d \nonumber \\
&=d^2 \, .\nonumber
\end{align}
Using (9) from \cite{matolcsi+12} and the above condition we can derive another equality constraint.
\begin{align}
\sum_{r \neq t}F(\gamma +\pi_r -\pi_t) &- \sum_{r \neq t} G(\gamma+\pi_r+\pi_t)=\frac{1}{d!}\sum_{\sigma}\sum_{r \neq t}\sum_{\mathbf{u},\mathbf{v}}\left(\frac{\mathbf{u}}{\mathbf{v}} \right)^{\sigma(\gamma+\pi_r -\pi_t)} \nonumber \\
&=\frac{1}{d!}\sum_{\sigma}\sum_{\mathbf{u},\mathbf{v}}\left(\frac{\mathbf{u}}{\mathbf{v}} \right)^{\sigma(\gamma)}\sum_{r \neq t}\left(\frac{\mathbf{u}}{\mathbf{v}} \right)^{\sigma(\pi_r-\pi_t)}=0 \, ,\label{eq:GFsum} 
\end{align}
where the sum over $\mathbf{u}$ and $\mathbf{v}$ is over all pairs of columns from different matrices. Together with $dG(\gamma)+\sum_{r \neq t}G(\gamma +\pi_r-\pi_t)=d^4$, equation (\ref{eq:GFsum}) now implies a further constraint
$$
dG(\gamma)+\sum_{r \neq t}F(\gamma +\pi_r -\pi_t)=d^4 \, .\label{cond:2}\\
$$

There are also some simple constraints coming from the function definitions
\begin{align}
&F(0)=d^4, & G(0)=d^3 \label{cond:3}\\
&0\le F(\gamma)\le d^4,   & 0 \le G(\gamma) \le d^3 \, ,
\end{align}

 and the Cauchy-Schwartz inequality implies that
$$
F(\gamma) \le d G(\gamma) \, . \label{cond:4}
$$

Matolcsi et al used the conditions given in Eqs. (\ref{cond:1}) - (\ref{cond:4}) to prove various structural results about mutually unbiased bases. They show that in dimensions $d = 2 \ldots 5$ a complete set must be composed of roots of unity only - from which it easily follows that the known complete sets are unique. Although these results were known before \cite{Bengtsson+13}, this new technique significantly reduces the difficulty in deriving the result in dimension $d=5$. The authors went on to conjecture that in dimension $d=6$, all complex Hadamard matrices (except the Spectral Hadamard) must satisfy an additional constraint motivated by the linear programming approach. Armed with this extra condition we could then be able to (finally) prove non-existence of a complete set in dimension six. We now present the main result of this paper proving Conjecture 2.3 of Matolcsi et al \cite{matolcsi+12} for the case of Karlsson Hadamard matrices. 
\begin{thm}
\label{Thm:ExtraConstraints}
Let $H_j$ be an element of the Karlsson Family of $6 \times 6$ complex Hadamard matrices defined in Section 2. For all permutations of $\boldmath{\rho}=(1,1,1,-1,-1,-1)$, we have,
\[g_{j}(\rho)=0 \, .\] 
\end{thm}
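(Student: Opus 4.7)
The definition $g_j(\gamma)=\frac{1}{d!}\sum_\sigma\sum_k\mathbf{c}_k^{\sigma(\gamma)}$ already averages over all $d!$ permutations of $\gamma$, so $g_j$ is invariant under permutations of its argument: $g_j(\sigma\rho)=g_j(\rho)$ for every $\sigma$. Hence the theorem's $20$ assertions (one per distinct permutation of $\rho_0=(1,1,1,-1,-1,-1)$) collapse to the single equation $g_j(\rho_0)=0$. Grouping the $6!$ inner permutations by the $3$-subset $S=\sigma(\{1,2,3\})\subseteq\{1,\ldots,6\}$---each such $S$ being reached by exactly $(3!)^2=36$ permutations---this identity becomes
\[ \sum_{k=1}^{6} T_k=0, \qquad T_k:=\sum_{|S|=3}\prod_{i\in S} h_{ik}\prod_{i\notin S}\overline{h_{ik}}, \]
where $h_{ik}$ are the entries of the Karlsson matrix $K$. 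This is the single polynomial identity that must be established; it is not a weaker ``averaged'' statement but is exactly equivalent to the theorem.

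Setting $f_k(x):=\prod_{i=1}^6(h_{ik}x+\overline{h_{ik}})$ gives $T_k=[x^3]f_k(x)$. The $H_2$-reducible structure suggests pairing rows by the blocks $B_j=\{2j-1,2j\}$ and pairing columns by block-column index $\kappa\in\{1,2,3\}$ (with within-block index $\mu\in\{1,2\}$ so $k=2(\kappa-1)+\mu$):
\[ f_k(x)=\prod_{j=1}^3 P^{(x)}_{j\kappa}(\mu), \qquad P^{(x)}_{j\kappa}(\mu):=\prod_{i\in B_j}\bigl(h_{ik}x+\overline{h_{ik}}\bigr), \]
with each quadratic $P^{(x)}_{j\kappa}(\mu)$ determined entirely by the $2\times2$ Karlsson block $M_{j\kappa}$ and by $\mu$. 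Pairing the two columns inside each block-column, I define $S_\kappa(x):=f_{2\kappa-1}(x)+f_{2\kappa}(x)$; the target then reads $[x^3](S_1+S_2+S_3)=0$.

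The boundary blocks collapse $S_\kappa$ considerably. Columns $1$ and $2$ of $F_2,Z_1,Z_2$ differ by a Hadamard product with $(1,-1)$, forcing $P^{(x)}(2)=-P^{(x)}(1)$; and the first within-block column of $F_2,Z_3,Z_4$ is $(1,1)$, giving $P^{(x)}(1)=(x+1)^2$. Substituting,
\[ S_1(x)=(x+1)^2\bigl[(x+1)^4-(z_3x+\overline{z_3})^2(z_4x+\overline{z_4})^2\bigr], \]
\[ S_2(x)=P^{(x)}_{Z_1}(1)\bigl[P^{(x)}_{a}(1)P^{(x)}_{c}(1)-P^{(x)}_{a}(2)P^{(x)}_{c}(2)\bigr], \]
\[ S_3(x)=P^{(x)}_{Z_2}(1)\bigl[P^{(x)}_{b}(1)P^{(x)}_{d}(1)-P^{(x)}_{b}(2)P^{(x)}_{d}(2)\bigr]. \]
For the middle blocks I substitute $a=\tfrac12 Z_3AZ_1$, $b=\tfrac12 Z_3BZ_2$, $c=\tfrac12 Z_4BZ_1$, $d=\tfrac12 Z_4AZ_2$ and use the orthogonality identities $AA^{*}=BB^{*}=2I$, $Z_iZ_i^{*}=2I$ (which follow from the explicit form of $A$ together with the short check $|A_{11}|^2+|A_{12}|^2=2$, and from $|z_i|=1$) to rewrite each $P^{(x)}_{a}(\mu),\ldots,P^{(x)}_{d}(\mu)$ as an explicit polynomial in $x$ with coefficients rational in $A_{11},A_{12},z_1,z_2,z_3,z_4$.

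Extracting $[x^3]$ from $S_1+S_2+S_3$ yields a rational function of the Karlsson parameters that I claim vanishes modulo the M\"obius relations \eqref{eq:mobius1}--\eqref{eq:mobius2}---which algebraically encode precisely the unimodularity of the entries of $a,b,c,d$. This final cancellation is the main obstacle: it is not term-by-term, and the three pieces naturally live in different parameter charts ($S_1$ in $z_3,z_4$; $S_2$ in $z_1$ via $\mathcal{M}_A$; $S_3$ in $z_2$ via $\mathcal{M}_B$). My plan is to use the M\"obius relations to eliminate $z_3^2$ and $z_4^2$ from $S_2$ and $S_3$, bringing $S_1,S_2,S_3$ over a common denominator in $A_{11},A_{12},z_1,z_2$; the resulting numerator should then collapse to zero upon applying $|A_{11}|^2+|A_{12}|^2=2$ and $B=-F_2-A$, a polynomial identity I expect to verify most efficiently with computer algebra.
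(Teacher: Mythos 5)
Your reduction of the theorem to the single identity $\sum_{k=1}^{6}T_k=0$ is correct (the permutation average built into the definition of $g_j$ does make the twenty assertions equivalent), and the generating-function bookkeeping $T_k=[x^3]\prod_i(h_{ik}x+\overline{h_{ik}})$, the block factorisation, and the observations $P^{(x)}(2)=-P^{(x)}(1)$ for the top block-row and $P^{(x)}(1)=(x+1)^2$ for the left block-column are all sound. But the argument stops exactly where the content of the theorem begins: the vanishing of $[x^3](S_1+S_2+S_3)$ modulo the M\"obius relations is only announced as something you ``claim'' and ``expect to verify most efficiently with computer algebra''. Nothing in the proposal establishes it, and you offer no mechanism for the cancellation --- you note yourself that it is not term-by-term and that the three pieces live in different parameter charts. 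As written this is a plan for a proof, not a proof; the entire difficulty of the theorem is concentrated in the step you have deferred.

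The paper's argument shows what a workable mechanism looks like, and also why averaging over permutations is the less convenient choice. Instead of summing all twenty $3$-subsets at once, it observes that under the $2\times2$ block structure they fall into just two classes and proves the stronger, un-averaged statement that $\sum_k\mathbf{c}_k^{\sigma(\rho)}=0$ for each class separately. For the ``alternating'' class $(1,-1,1,-1,1,-1)$ the sum dies almost immediately from unitarity of the blocks $a,b,c,d$ (the column-orthogonality relations $a_{11}\overline{a_{21}}+a_{12}\overline{a_{22}}=0$, etc.). For the class $(1,1,1,-1,-1,-1)$ the M\"obius relations reduce the expression to $t_0+z_3t_1$, where $t_0$ and $t_1$ are short expressions whose vanishing depends only on $A_{11},A_{12},B_{11},B_{12}$ (for instance $\Re(A_{11}^2\overline{B_{11}^2})=\Re(A_{12}^2\overline{B_{12}^2})$), and these are then checked directly. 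By averaging you mix both classes into one large rational identity and forfeit the cheap unitarity cancellation available for the alternating class; if you pursue your route you will still need to produce the analogues of $t_0=t_1=0$, so you should at minimum either actually carry out and report the computer-algebra verification, or split by subset type before extracting coefficients.
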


\begin{proof}[\bf{Proof of Theorem \ref{Thm:ExtraConstraints}}]
Due to the structure of the Karlsson Hadamards, many permutations of $\rho$ are equivalent. Swapping elements in $\rho$ corresponds to permuting rows of the Hadamard. Since a Karlsson Hadamard consists of 9 $2 \times 2$ sub-matrices, we can generate vectors equivalent to $\rho$ by swapping columns $(1,2)$, $(3,4)$ and $(5,6)$. In addition, we can also swap pairs of columns $(1,2)$ with $(3,4)$ etc. All vectors $\rho$ will have to have at least one alternating pair $(1,-1)$ as there are an odd number 1s and -1s. The remaining pairs must then either be alternating or  $(1,1)$ and $(-1,-1)$ This means there are only two non-equivalent permutations of $\rho = (1,1,1,-1,-1,-1)$ and $\rho = (1,-1,1,-1,1,-1)$. We treat the two cases separately.

The function $g_j(\rho)$ is given by  
\begin{align}
\label{eq:ProofExtraConstraints_Step1}
g(\rho)  & = \sum^{6}_{j=1} \prod^{6}_{i=1}(H)^{\rho_{i}}_{i j} \\
         & = 1+\sum^{6}_{j=2}( 1+\prod^{6}_{i=2}(H)^{\rho_{i}}_{i j})
\end{align}
where we drop the subscript $j$ since it simply refers to the label of the Hadamard matrix in the set. Using the definition for Karlsson matrices from equation (\ref{eq:Karlsson_Standard}), we have

\begin{align*}
g(\rho) = & 1 + (-1)^{\rho_2+\rho_4+\rho_6}z_{3}^{\rho_3+\rho_4} z_{4}^{\rho_5+\rho_6}&+ z_{1}(a_{11}^{\rho_3}a_{21}^{\rho_4}c_{11}^{\rho_5}c_{21}^{\rho_6}-a_{12}^{\rho_3}a_{22}^{\rho_4}c_{12}^{\rho_5}c_{22}^{\rho_6})\\
& &+z_{2}(b_{11}^{\rho_3}b_{21}^{\rho_4}d_{11}^{\rho_5}d_{21}^{\rho_6}-b_{12}^{\rho_3}b_{22}^{\rho_4}d_{12}^{\rho_5}d_{22}^{\rho_6})
\end{align*}

\subsection*{Case 1:  $\rho = (1,-1,1,-1,1,-1)$}
We use the fact that the sub-matrices $a,b,c,d$ are unitary to cancel pairs of terms. Substituting  $\rho = (1,-1,1,-1,1,-1)$, we have
\begin{align*}
g(\rho) = 1 -z_{3}^0z_{4}^0 
\,+\, & \overbar{z_{1}}(a_{11}\overbar{a_{21}}c_{11}\overbar{c_{21}}-a_{12}\overbar{a_{22}}c_{12}\overbar{c_{22}})\\
+\, & \overbar{z_{2}}((b_{11}\overbar{b_{21}}d_{11}\overbar{d_{21}}-b_{12}\overbar{b_{22}}d_{12}\overbar{d_{22}})
\end{align*}
\noindent
Since $a$ and $b$ are unitary matrices, $a_{11}\overline{a_{21}}+a_{12}\overline{a_{22}}=0$, and $b_{11}\overline{b_{21}}+b_{12}\overline{b_{22}}=0$,
\begin{equation*}
g_1(\rho) = \overbar{z_{1}}a_{11}\overbar{a_{21}}(c_{11}\overline{c_{21}}+c_{12}\overbar{c_{22}}) + \overline{z_{2}}b_{11}\overbar{b_{21}}(d_{11}\overbar{d_{21}}+d_{12}\overbar{d_{22}}) \, .
\end{equation*}
Since $c$ and $d$ are also unitary, it is easy to see that $g(\rho)=0$.

\subsection*{Case 2: $\rho = (1,1,1,-1,-1,-1)$}
The proof for $\rho = (1,1,1,-1,-1,-1)$ is more involved requiring us to make use of the parametrisation of Karlsson's family in terms of Modius transformations. Again using the fact that $a$ and $b$ are unitary matrices, we have that 

\begin{alignat*}{4}
g(\rho) &= 1 -\overbar{z_4}^{2} &+& z_{1}a_{11}\overbar{a_{21}}(\overbar{c_{11}}\overbar{c_{21}}+\overbar{c_{12}}\overbar{c_{22}})\\
&& +& z_{2}b_{11}\overbar{b_{21}}(\overbar{d_{11}}\overbar{d_{21}}+\overbar{d_{12}}\overbar{d_{22}})\\
\end{alignat*}
Now substituting the definition for $c$ and $d$ and simplifying using the Modius transformation given in Eqs. (\ref{eq:mobius1}), (\ref{eq:mobius2}) and (\ref{eq:mobius3}) you get,
\begin{align*}
\overbar{c_{11}}\overbar{c_{21}}+\overbar{c_{12}}\overbar{c_{22}}&=\overbar{z_1^2(B_{12}^2-\overbar{B_{11}^{2}}z_{4}^{2})}\\ \overbar{d_{11}}\overbar{d_{21}}+\overbar{d_{12}}\overbar{d_{22}}&=\overbar{z_2^2(A_{12}^2-\overbar{A_{11}^{2}}z_{4}^{2})} \, .\\
\intertext{Similarly, from the definition of $a$ and $b$, we find that}
a_{11}\overbar{a_{21}}&=\frac{1}{2}\overbar{z_{1}z_{3}}(A_{11}^2-A_{12}^2z_1^2+A_{11}A_{12}z_3+\overbar{A_{11}}A_{12})\\
b_{11}\overbar{b_{21}}&=\frac{1}{2}\overbar{z_{2}z_{3}}(B_{11}^2-B_{12}^2z_2^2+B_{11}B_{12}z_3+\overbar{B_{11}}B_{12})
\end{align*}
Substituting these last four expressions into $g$ gives
\begin{align*}
g(\rho) &= 1 -\overbar{z_4}^{2} &+&\frac{1}{2}\overbar{z_3}(A_{11}^2-A_{12}^2z_1^2+A_{11}A_{12}z_3+\overbar{A_{11}}A_{12})(\overbar{B_{12}}^2-B_{11}^{2}\overbar{z_{4}^{2}})\\
&&+&\frac{1}{2}\overbar{z_3}(B_{11}^2-B_{12}^2z_2^2+B_{11}B_{12}z_3+\overbar{B_{11}}B_{12})(\overbar{A_{12}^2}-A_{11}^{2}\overbar{z_{4}^{2}})\\
&=\frac{1}{2z_3z_{4}^2} \left[\right.  (2z_3z_4^2-2z_3) &+&(A_{11}^2-A_{12}^2z_1^2+A_{11}A_{12}z_3+\overbar{A_{11}}A_{12})(\overbar{B_{12}}^2z_4^2-B_{11}^{2}) \\
&&+&(B_{11}^2-B_{12}^2z_2^2+B_{11}B_{12}z_3+\overbar{B_{11}}B_{12})(\overbar{A_{12}^2}z_4^2-A_{11}^{2}\left.)\right] 
\end{align*}
We can drop the factor $\frac{1}{2z_3z_3^2}$ as it does not effect whether $g$ is zero or not, expanding terms
\begin{align*}
g(\rho) =  ( A_{11}^2 - & A_{12}z_1^2)(\overbar{B_{11}^2}z_4^2-B_{12}^2) + (B_{11}-B_{12}^{2}z_{2}^{2})(\overbar{A_{11}}z_{4}^{2}-A_{12}^{2})\\
  + z_3 [ & 2z_4^2-2 + (A_{11}\overbar{A_{12}} + \overbar{A_{11}}A_{12}z_1^2) (\overbar{B_{11}^2}z_4^2-B_{12}^2)\\
 & + (B_{11}\overbar{B_{12}} + \overbar{B_{11}}B_{12}z_1^2) (\overbar{A_{11}^2}z_4^2-A_{12}^2) ] \\
 \equiv & t_0 + z_3 t_1
\end{align*}

The coefficients of the zeroth and first power of $z_3$ can be examined separately and we now show that they are both zero $t_0=t_1=0$. The zeroth power coefficient, $t_0$, is 
$$
t_0 = (A_{11}^2-A_{12}z_1^2)(\overbar{B_{11}^2}z_4^2-B_{12}^2)+(B_{11}-B_{12}^{2}z_{2}^{2})(\overbar{A_{11}}z_{4}^{2}-A_{12}^{2})\\
$$
Expanding terms,
\begin{align*}
t_0=&z_4^{2}(A_{11}^2\overbar{B_{11}^2}-A_{12}^2\overbar{B_{12}^{2}}+B_{11}^2\overbar{A_{11}^2}-B_{12}^{2}\overbar{A_{12}^2})+A_{11}^2B_{11}^{2}-A_{11}^2B_{11}^{2}-A_{11}^{2}B_{12}^{2}+A_{11}^{2}B_{12}^{2} \\
=&z_4^{2}(A_{11}^2\overbar{B_{11}^2}-A_{12}^2\overbar{B_{12}^{2}}+B_{11}^2\overbar{A_{11}^2}-B_{12}^{2}\overbar{A_{12}^2}) \\
=&2\left(\Re(A_{11}^2\overbar{B_{11}^2})-\Re(A_{12}^2\overbar{B_{12}^2})\right)
\end{align*}
A quick check using maple and the definitions of $A_{11}$,  $A_{12}$,  $B_{11}$,  $B_{12}$ shows that $\Re(A_{11}^2\overbar{B_{11}^2})=\Re(A_{12}^2\overbar{B_{12}^2})$.
Thus the coefficient $t_0=0$. 

Now consider the coefficient, $t_1$  
\begin{alignat*}{4}
t_1 = &2z_4^2-2&+&(A_{11}\overbar{A_{12}}+\overbar{A_{11}}A_{12}z_1^2)(\overbar{B_{11}^2}z_4^2-B_{12}^2)\\
&&+&(B_{11}\overbar{B_{12}}+\overbar{B_{11}}B_{12}z_2^2)(\overbar{A_{11}^2}z_4^2-A_{12}^2)\\
=&2z_4^2-2&+&\overbar{A_{11}}A_{12}z_1^2(\overbar{B_{11}^2}z_4^2-B_{12}^2)+\overbar{B_{11}}B_{12}z_2^2(\overbar{A_{11}^2}z_4^2-A_{12}^2)\\
&&+&A_{11}\overbar{A_{12}}(\overbar{B_{11}^2}z_4^2-B_{12}^2)+B_{11}\overbar{B_{12}}(\overbar{A_{11}^2}z_4^2-A_{12}^2) \, . \\
\intertext{Using the Modius transformation it can be rewritten as}
t_1 =&2z_4^2-2&+&\overbar{A_{11}}A_{12}(\overbar{B_{12}^2}z_4^2-B_{11}^2)+\overbar{B_{11}}B_{12}(\overbar{A_{12}^2}z_4^2-A_{11}^2)\\
&&+&A_{11}\overbar{A_{12}}z_1^2(\overbar{B_{11}^2}z_4^2-B_{12}^2)+B_{11}\overbar{B_{12}}(\overbar{A_{11}^2}z_4^2-A_{12}^2) \\
=&(z_4^2-1)(2&+&\overbar{A_{11}}A_{12}\overbar{B_{12}^2}+\overbar{B_{11}}B_{12}\overbar{A_{12}^{2}} + A_{11}\overbar{A_{12}}\overbar{B_{11}^2}+B_{11}\overbar{B_{12}}A_{11}^2) \, .
\end{alignat*}
Again, the definitions of $A$ and $B$ imply that the right hand bracket is zero so that $t_1 = 0$ completing our proof.
\end{proof}

The isolated Hadamard matrix is defined by 
\label{Def:Spectral}
\begin{equation}
\label{eq:spectral}
S^{(0)}_6 = \begin{pmatrix}
		1 & 1 & 1 & 1 &1 & 1\\
		1 & 1 & \omega & \omega & \omega^2 & \omega^2\\
		1 & \omega & 1 & \omega^2 & \omega^2 & \omega\\
		1 & \omega & \omega^2 & 1 & \omega & \omega^2\\
		1 & \omega^2 & \omega^2 & \omega & 1 & \omega \\
		1 & \omega^2 & \omega & \omega^2 & \omega & 1 
		\end{pmatrix} \, ,
\end{equation}
where $ \omega = exp\left(\frac{2\pi i}{3}\right)$. It was noted previously that Theorem \ref{Thm:ExtraConstraints} does not hold for $S^{(0)}_6$ \cite{matolcsi}. 

\section{Mutually unbiased Hadamards and Linear Programming}
The Fourier analytic approach transforms the MUBs existence problem into a \emph{linear} program. The functions $G(\gamma)$ and $F(\gamma)$ defined in Eqs. (\ref{Def:bigG}) and (\ref{Def:bigF}) respectively are treated as variables for each instance of $\gamma$. The constraints on the Hadamard matrices forming a complete set of MUBs, Eqs. (\ref{cond:1}) - (\ref{cond:4}) are now \emph{linear} constraints. 

We aim to prove a structural result on the matrix entries - or show that the program cannot be satisfied. For example, suppose we minimise the variable $G(\rho)$ where $\rho$ is any permutation of $(d,-d,0,\dots,0)$. We know from Eq. (\ref{cond:4}) that $G(\rho)\leq d^3$ so if the linear program returns $G(\rho)\geq d^3$ all elements of the matrices are roots of unity since this is the only way to achieve $G(\rho)=d^3$. Matolcsi et al. use this linear program to prove that the known complete sets of MUBs in dimensions $d=2-5$ are unique.

The functions $g(\gamma)$, $G(\gamma)$ and $F(\gamma)$ take as inputs $\gamma \in \mathbb{Z}^{D}$. We treat each function with a fixed input as a variable in the linear program. In order to limit the size of the search space, we consider a subset, $\Gamma \subset \mathbb{Z}^{D}$, of possible elements $\gamma$. There is considerable flexibility in the choice of $\Gamma$, the aim being to restrict the number of possible $\gamma$ while still including a useful set of equations. We use the set
\begin{defn}
\begin{align}
\Gamma_l &= \left\{\gamma \in \mathbb{Z}^D| |\gamma_1| + \dots + |\gamma_D|\le l \right\} 
\end{align}
for $l\geq 2D$.
\end{defn}

Using the constraints defined in Eqs. (\ref{Def:smallg})-(\ref{Def:bigF}) it is easy to show that for $d=3$, $4$, $5$ that a complete set of MUBs constrains Hadamard matrices whose elements are $d$th roots of unity only  \cite{matolcsi+12}. Unfortunately, running the same program in dimension six does not yield such a structural result.

In order to prove non-trivial structural results for dimension six, we must add more conditions to the linear program. Theorem 1 provides such a condition for the Karlsson family which may allow us to exclude a complete set of MUBs consisting of Karlsson Hadamards alone. Constructing the linear program, we aim to show that there no maximal set of mutually unbiased Karlsson matrices. Such a result would follow if, either we find the constraints contradict themselves so a result is infeasible, or the linear program concludes that the matrices have elements that are 6th roots (or 72nd roots) of unity. A previous brute force computer search has shown that such a set does not exist \cite{bengtsson+07}.

For these proofs to work it is important to have a sufficiently large $l$ in the gamma space, most of the linear programs just required $l=2 d$ as then $\rho \in \Gamma$, however some required a larger $l$. For the non-existence proof you want to have any many conditions as possible as it is difficult to know where a contradiction would occur.

The linear program was written in MATLAB as this has a fast linear program solver. It uses an interior point method for solving large problems such as these, which is a altered form of the Newton method. MATLAB also makes use of sparse matrices for equality and inequality relations, this is vital as the full matrices would have used many tens of GB of RAM while the sparse representation required only tens of MB. 

As the size of the search space $\Gamma_l$ is increased the linear program began to fail due to precision requirements. We were able to resolve this by using the simplex algorithm but unfortunately the time required to solve the linear program increases dramatically (for example several days of computing time were required for problems such with $d=6$ and $l=30$). 
Unfortunately for all sizes of the search space we tried the LP did not return a useful result such as a contradiction or a tightly constrained value of $F(\rho)$. Our results in dimension 6 are summarised in table \ref{table:res}. We show the values of the function $F(\rho)$ for two input values  $\rho_1=(6,-6,0,\dots,0)$ and $\rho_2=(12,-12,0,\dots,0)$ for varying sizes of the search space defined by $l$.
\begin{table}[h]
\begin{center}
    \begin{tabular}{ | l | l | l |}

    \hline
    $l$ & $F(\rho_1)$ & $F(\rho_2)$  \\ \hline
    22 & 71.01 & n/a \\ \hline
    24 &50.02 &  95.62\\ \hline
    26 & 70.52 & 76.10\\ \hline
    28 & 55.81 & 107.72 \\ \hline
    30 & 60.78 & 57.76  \\
    \hline
    
    \end{tabular}
    
\end{center}
\caption{ Table of lower bounds for $F(\rho)$ found by the linear program for varying sizes of the search space $\Gamma_l$ defined in the text.}
 \label{table:res}
\end{table}\\
\noindent

\section{Conclusion}
Proving the non-existence of a complete set of MUBs in any dimension is a highly non-linear. The great strength of the Fourier analytic approach is that it transforms the problem into a linear one. We have continued this approach by proving an additional constraint on a large class of six dimensional Hadamard matrices called the Karlsson family. Although we where not able to derive any further structural results from the linear program we believe that this result together with further ideas is a promising approach to this long standing open problem.

\section*{Acknowledgements}
We would like to thank Mate Matolcsi for helpful comments and sharing his insights. AM was supported by the Heilbronn Institute through a summer research bursary.

\end{document}